\documentclass[conference]{IEEEtran}
\IEEEoverridecommandlockouts
% The preceding line is only needed to identify funding in the first footnote. If that is unneeded, please comment it out.
\usepackage{makecell}
\usepackage{cite}
\usepackage{amsmath,amssymb,amsfonts}
\DeclareMathOperator*{\argmax}{arg\,max}
\usepackage{algorithmic}
\usepackage{graphicx}
\usepackage{textcomp}
\usepackage{xcolor}

\newtheorem{lemma}{Lemma}
\newtheorem{remark}{Remark}
\usepackage{bbm}

\usepackage{bm}
\newtheorem{proof}{Proof}
\def\BibTeX{{\rm B\kern-.05em{\sc i\kern-.025em b}\kern-.08em
    T\kern-.1667em\lower.7ex\hbox{E}\kern-.125emX}}
\begin{document}

\setlength{\textfloatsep}{0.11cm}
\setlength{\abovedisplayskip}{0.10cm}
 \setlength{\belowdisplayskip}{0.10cm}
 \setlength{\abovecaptionskip}{-1mm}

\title{MIMO Asynchronous MAC with Faster-than-Nyquist (FTN)  Signaling\\
%{\footnotesize \textsuperscript{*}Note: Sub-titles are not captured in Xplore and
%should not be used}
\thanks{This work was funded in part by a
Discovery Grant awarded by the Natural Sciences and Engineering
Research Council of Canada (NSERC), 
 in part by the Scientific and Technological Research Council of Turkey, TUBITAK, under grant
122E248 and in part by the Research Fund of the Middle
East Technical University, under project 11104.}
}

%\author{
%\IEEEauthorblockN{Zichao Zhang}
%\IEEEauthorblockA{\textit{Department of  Systems and Computer Engineering} \\
%\textit{Carleton University}\\
%Ottawa, Canada \\
%zichaozhang@cmail.carleton.ca}
%\and
%\IEEEauthorblockN{Melda Yuksel}
%\IEEEauthorblockA{\textit{Department of Electrical and Electronics Engineering} \\
%\textit{Middle East Technical University}\\
%Ankara, Turkey \\
%ymelda@metu.edu.tr}
%\and
%\IEEEauthorblockN{Halim Yanikomeroglu}
%\IEEEauthorblockA{\textit{Department of  Systems and Computer Engineering} \\
%\textit{Carleton University}\\
%Ottawa, Canada \\
%halim@sce.carleton.ca}
%\and
%\IEEEauthorblockN{Benjamin  K. Ng}
%\IEEEauthorblockA{\textit{ Faculty of Applied Sciences} \\
%\textit{ Macao Polytechnic
%University}\\
% Macao SAR, China \\
%bng@mpu.edu.mo}
%\and
%\IEEEauthorblockN{Chan-Tong Lam}
%\IEEEauthorblockA{\textit{ Faculty of Applied Sciences} \\
%\textit{ Macao Polytechnic
%University}\\
%Macao SAR, China \\
%ctlam@mpu.edu.mo}
%}

\author{Zichao~Zhang\textsuperscript{\P} ,
		~Melda~Yuksel\textsuperscript{\S},~Halim~Yanikomeroglu\textsuperscript{\P},~Benjamin~K.~Ng\textsuperscript{\dag},~Chan-Tong~Lam\textsuperscript{\dag}  \\
  \textsuperscript{\P}Department of  Systems and Computer Engineering, Carleton University, Ottawa, ON, Canada  \\
  \textsuperscript{\S}Department of Electrical and Electronics Engineering, Middle East Technical University, Ankara, Turkey  \\
  \textsuperscript{\dag}Faculty of Applied Sciences, Macao Polytechnic
 University, Macao SAR, China \\ \\
 Emails: zichaozhang@cmail.carleton.ca, ymelda@metu.edu.tr, halim@sce.carleton.ca, \{bng, ctlam\}@mpu.edu.mo 
}

\maketitle

\begin{abstract}
%This paper studies the performance of the combination of  multiple-input multiple-output, asynchronous multiple-access channel transmission as well as faster-than-Nyquist signaling. A novel power optimization scheme is proposed to exploit the benefit of the above  three  technologies. At the end we show the benefit brought by the proposed power optimization scheme.

Faster-than-Nyquist (FTN) signaling is a non-orthogonal transmission technique, which brings in intentional inter-symbol interference. This way it can significantly enhance spectral efficiency for practical pulse shapes such as the root raised cosine pulses. This paper proposes an achievable rate region for the multiple antenna (MIMO) asynchronous multiple access channel (aMAC) with FTN signaling. The scheme applies waterfilling in the spatial domain and precoding in time. Waterfilling in space provides better power allocation and precoding helps mitigate inter-symbol interference due to asynchronous transmission and FTN. The results show that the gains due to asynchronous transmission and FTN are more emphasized in MIMO aMAC than in single antenna aMAC. Moreover, FTN improves single-user rates, and asynchronous transmission improves the sum-rate, due to better inter-user interference management.   %{\color{red}Zichao, do you think we can include SISO curves in Fig. 1 to support this sentence?}%{\color{blue} Can you please write the authors in an alternate way as in https://arxiv.org/pdf/2211.07872.pdf  ? you can just comment out the current version. This way it takes a lot of space. As many people belong to the same institution they generally write it in a different way as in the above reference.}
\end{abstract}

\begin{IEEEkeywords}
multiple-input multiple-output (MIMO), asynchronous multiple-access channel (aMAC), faster-than-Nyquist signaling.
\end{IEEEkeywords}

\section{Introduction}
Multiple-input multiple-output (MIMO) transmission schemes have gained an increasing momentum over the last 20 years \cite{telatarmimo,massivemimo,cellfree}. MIMO is vitally important in 5G and beyond since bandwidth resources are limited and the number of devices keep growing \cite{5g}. MIMO transmission offers a solution to this problem as it provides both spatial multiplexing and diversity gains that drastically improve transmission rates without needing more bandwidth \cite{telatarmimo}. 

Traditional multiple access technologies, such as time-division multiple access (TDMA) or frequency-division multiple access (FDMA) are not sufficient to accommodate a large number of devices with limited resources. Therefore, alternative approaches, such as non-orthogonal multiple-access (NOMA) technologies for both uplink and downlink, have been proposed to improve spectral efficiency \cite{noma}.  In the uplink, NOMA is a special case of the multiple access channel (MAC). Unlike traditional methods, multiple access transmission allows multiple users to share the same frequency and time resources \cite{noma, cover}. 

The capacity region for the conventional two-user single-input single-output (SISO) MAC is found in \cite{cover}, while its MIMO counterpart is found in \cite{goldsmith}. 
%The conventional two-user single-input single-output (SISO) MAC has the pentagon capacity region \cite{cover}, while its MIMO counterpart extends that capacity region by providing spatial diversity \cite{goldsmith}.      
It is shown in \cite{verdu} that asynchronous MAC (aMAC) expands the synchronous MAC capacity regions in \cite{cover} and \cite{goldsmith} further. Asynchronous trasmission, or an intentional extra time delay between users enables better inter-user interference mitigation and is  beneficial to the system. In \cite{cellfree}, the authors study the downlink performance of cell-free asynchronous massive MIMO and then apply rate splitting to deal with multi-user interference.

Another promising technology for future communication systems is faster-than-Nyquist (FTN) signaling \cite{evolutionftn}. FTN improves spectral efficiency at the expense of computational complexity. Due to the increased transmission rate, the transmitted pulses do not satisfy Nyquist criterion and are not orthogonal to each other. This induces inter-symbol interference (ISI). However, this intentional ISI can be controllable under certain conditions \cite{ourpaper}, we can use precoding to detangle the correlated symbols at the receiver. Therefore, it is reasonable to employ MIMO, asynchronous transmission and FTN together to boost the performance.

In this paper, we study the performance of MIMO aMAC with FTN. We first develop the system model in Section \ref{sec:model}. We then propose a power allocation scheme that induces an achievable rate region in Section~\ref{sec:powallo}. We present the achievable regions and sum rate performance in Section~\ref{sec:num}. Finally, we conclude the paper in Section~\ref{sec:conc}.

\section{System Model}\label{sec:model}

The multiple-access communication system is assumed to have two transmitters and one receiver. The transmitters have $L$ antennas each and the receiver is equipped with $M$ antennas. In order to reap the gains due to asynchronous transmission \cite{verdu}, each user transmits with a fixed time delay $\tau_k$, $k=1,2$. We denote the $s$th symbol, $s=0,\ldots,N-1$ transmitted from the $l$th antenna, $l =1,\ldots,L$, of the $k$th user, $k=1,2$, as $a^l_k[s]$ . Both the transmitters use the same pulse shaping filter $p(t)$ and the receiver employs the matched filter $p^*(-t)$. In our paper, we assume $p(t)$ is the root-raised cosine pulse with roll-off factor $\beta$.  Furthermore, there is quasi-static fading, denoted with $h_k^{ml}$, indicating the channel coefficient between the $l$th transmit antenna of the $k$th user to the $m$th receive antenna, $m=1,\ldots,M$. Both users transmit $N$ symbols with symbol period $\delta T$, where $\delta \in [0,1]$ is the acceleration factor in FTN. With the presence of complex additive white Gaussian noise $z(t)$ at the receiver, the signal at the output of the matched filter of the $m$th receive antenna is
\begin{figure}[t]
    \centering
    \includegraphics[scale=0.06]{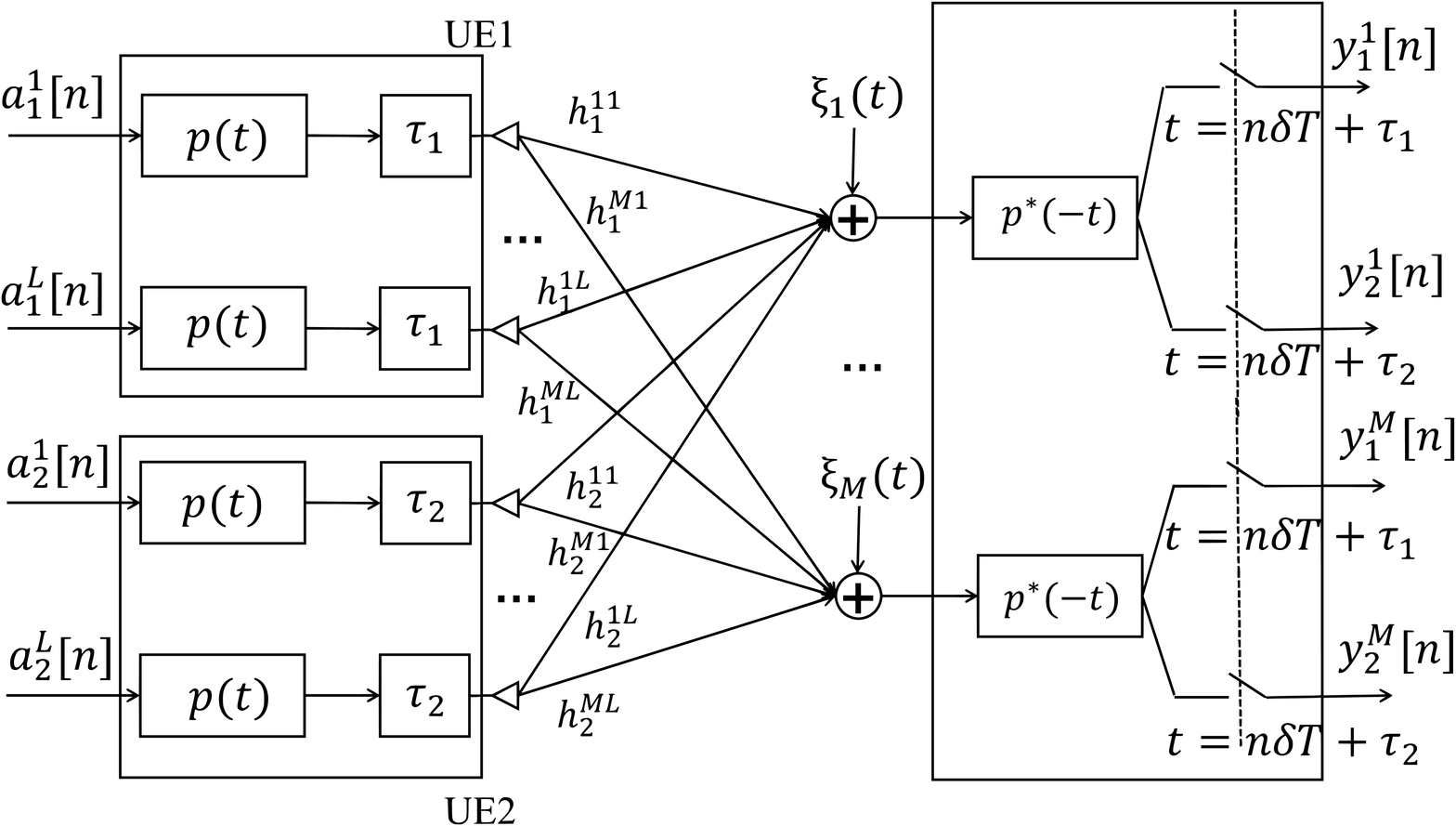}
    \caption{System model for MIMO aMAC FTN.}
    \label{fig:my_label}
\end{figure}
\begin{equation}
y^m(t)=\sum_{k=1}^2\sum_{l=1}^Lh_{k}^{ml}\sum_{s=0}^{N-1} a_k^l[s]g(t-s\delta T-\tau_k) + \eta(t).
\end{equation}
Here $g(t)=p(t)\star {p^*(-t)}$, and $\eta(t)=z(t)\star {p^*(-t)}$, where $\star$ represents the convolution operation. In order to have sufficient statistics \cite{verdu}, we need to sample $y^m(t)$ according to each user's time delay. This requires us to sample at time instants $n\delta T+\tau_{k'}$ to obtain the set of samples $y^m_{k'}[n]$, $k'=1,2$. As a result, there are $2MN$ samples in total, namely $y^m_{k'}[n]=y^m(n\delta T+\tau_{k'})$ where 
%{\color{red} does my $-$ sign insertion change the definition of g? Maybe it becomes $g((s-n)\deltaT)$}
\begin{align}
y^m_{k'}[ n]&=\sum_{k=1}^2\sum_{l=1}^Lh_{k}^{ml}\sum_{s=0}^{N-1} a_k^l[s]g((n-s)\delta T+(\tau_{k'}-\tau_k)) \notag\\ &\quad\quad\quad\quad\quad\quad\quad\quad+\eta(n\delta T+\tau_{k'}).
\end{align}
 %, and each set of $MN$ samples are designated for one of the users. 
This system model, shown in Fig. \ref{fig:my_label}, can also be written in a matrix multiplication form as
% \begin{align}
% \left[\begin{matrix}\bm{y}^1_1 \\ \bm{y}^2_1 \\ \vdots \\ \bm{y}^M_1\\ \bm{y}^1_2 \\ \bm{y}^2_2 \\ \vdots \\ \bm{y}^M_2 \end{matrix} \right]&=\left[\begin{matrix}\bm{H}_1\otimes\bm{G}_\delta & \bm{H}_2\otimes\bm{G}_{\delta,12} \\ \bm{H}_1\otimes\bm{G}_{\delta,21} & \bm{H}_2\otimes\bm{G}_\delta\end{matrix}\right] \left[\begin{matrix}\bm{a}^1_1 \\ \bm{a}^2_1 \\ \vdots \\ \bm{a}^L_1\\ \bm{a}^1_2 \\ \bm{a}^2_2 \\ \vdots \\ \bm{a}^L_2 \end{matrix}  \right] + \left[\begin{matrix}\bm{\eta}^1_1 \\ \bm{\eta}^2_1 \\ \vdots \\ \bm{\eta}^M_1\\ \bm{\eta}^1_2 \\ \bm{\eta}^2_2 \\ \vdots \\ \bm{\eta}^M_2 \end{matrix} \right]  \\
%    \text{or } \left[\begin{matrix}\bm{Y}_1 \\ \bm{Y}_2\end{matrix}\right] 
%  &=\left[\begin{matrix}\bm{H}_1\otimes\bm{G}_\delta & \bm{H}_2\otimes\bm{G}_{\delta,12} \\ \bm{H}_1\otimes\bm{G}_{\delta,21} & \bm{H}_2\otimes\bm{G}_\delta\end{matrix}\right]\left[\begin{matrix}\bm{A}_1 \\ \bm{A}_2\end{matrix}\right]  +\left[\begin{matrix}\bm{\Omega}_1 \\ \bm{\Omega}_2\end{matrix}\right], \\
%   &\text{for simplification,  } \bm{Y}=\bm{H}\bm{A}+\bm{\Omega}
% \end{align}
\begin{align}
\left[\begin{matrix}\bm{Y}_1 \\ \bm{Y}_2\end{matrix}\right] 
 &=\left[\begin{matrix}\bm{H}_1\otimes\bm{G}_\delta & \bm{H}_2\otimes\bm{G}_{\delta,12} \\ \bm{H}_1\otimes\bm{G}_{\delta,21} & \bm{H}_2\otimes\bm{G}_\delta\end{matrix}\right]\left[\begin{matrix}\bm{A}_1 \\ \bm{A}_2\end{matrix}\right]  +\left[\begin{matrix}\bm{\Omega}_1 \\ \bm{\Omega}_2\end{matrix}\right],   \\
 \textrm{or}\quad \bm{Y} &=\bm{H}\bm{A}+\bm{\Omega},
\end{align}
where $\bm{Y}_k=[(\bm{y}_k^1)^T,\dots,(\bm{y}_k^M)^T]^T$, $\bm{A}_k=[(\bm{a}_k^1)^T,\dots,(\bm{a}_k^L)^T]^T$ and $\bm{\Omega}_k=[(\bm{\eta}_k^1)^T,\dots,(\bm{\eta}_k^M)^T]^T$, $k=1,2$. We also have $\bm{a}_k^l=[a_k^l[0],a_k^l[1], \dots, a_k^l[N-1]]$,  $\bm{y}_k^l=[y_k^l[0],y_k^l[1], \dots, y_k^l[N-1]]$, $\bm{\eta}_k^l=[\eta_k^l[0],\eta_k^l[1], \dots, \eta_k^l[N-1]]$.  The matrix $\bm{H}_k$ is the $M\times L$ channel coefficient matrix between the $k$th user and the receiver. Its entries are given as $(\bm{H}_k)_{m,l}=h^{ml}_k$. The matrices $\bm{G}_\delta$, $\bm{G}_{\delta,12}$ and $\bm{G}_{\delta,21}$ have their entries equal to $(\bm{G}_{\delta})_{n,s}=g((n-s)\delta T)$, $(\bm{G}_{\delta,12})_{n,s}=g((n-s)\delta T+(\tau_1-\tau_2))$ and similarly $(\bm{G}_{\delta,21})_{n,s}=g((n-s)\delta T+(\tau_2-\tau_1))$. We can see that  $\bm{G}_\delta=\bm{G}_\delta^\dagger$, $\bm{G}_{\delta,12}^\dagger=\bm{G}_{\delta,21}$ and they are all Toeplitz matrices \cite{gray}. Note that an $N\times N$ Toeplitz matrix $\bm{T}_N$ has entries $(\bm{T}_N)_{n,s}=t_{n-s}$, $n,s=0,\dots,N-1$.

\subsection{The Mutual Information Expressions}

The mutual information expression for the sum rate is written as \cite{cover} 
\begin{align}
    I(\bm{Y};\bm{A}_1,\bm{A}_2)&=h(\bm{Y})-h(\bm{Y}|\bm{A}_1,\bm{A}_2) \\
    &=\log\det\left(\mathbb{E}\left[\bm{Y}\bm{Y}^\dagger\right]\right)-\log\det\left(\mathbb{E}\left[\bm{\Omega}\bm{\Omega}^\dagger\right]\right), \label{eqn:summuldef}
\end{align}
where $h(\cdot)$ is the differential entropy. The covariance matrix for the colored Gaussian noise samples $\bm{\Omega}$ is 
\begin{align}
\mathbb{E}\left[\bm{\Omega}\bm{\Omega}^\dagger\right]&\triangleq\bm{\Sigma_\Omega}%=\mathbb{E}\left[\left[\begin{matrix}\bm{\Omega}_1 \\ \bm{\Omega}_2\end{matrix}\right]\left[\begin{matrix}\bm{\Omega}_1^\dagger & \bm{\Omega}_2^\dagger\end{matrix}\right]\right] 
%\notag\\
=\left[\begin{matrix}\bm{I}_M\otimes\bm{G}_\delta &\bm{I}_M\otimes\bm{G}_{\delta,12} \\ \bm{I}_M\otimes\bm{G}_{\delta,21}  & \bm{I}_M\otimes\bm{G}_\delta \end{matrix}\right],\label{eqn:noisecov}
\end{align} where $\bm{I}_M$ is the $M\times M$ identity matrix. As a result, we can write \eqref{eqn:summuldef} as \begin{align}
    &I(\bm{Y};\bm{A}_1,\bm{A}_2) \notag \\
    &= \log_2\det\left(\bm{I}_{2MN}+\bm{\Sigma_\Omega}^{-1}  \bm{H}
    \left[\begin{matrix}\bm{\Sigma}_{\bm{A}1} &\bm{0} \\ \bm{0}  & \bm{\Sigma}_{\bm{A}2} \end{matrix}\right]
    \bm{H}^\dagger \right)    \\
    & = \log_2\det\left(\bm{I}_{2MN}+\bm{H}^\dagger\bm{\Sigma_\Omega}^{-1}  \bm{H}
    \left[\begin{matrix}\bm{\Sigma}_{\bm{A}1} &\bm{0} \\ \bm{0}  & \bm{\Sigma}_{\bm{A}2} \end{matrix}\right]
     \right) \label{eqn:mulinfodet}
\end{align}
%in \eqref{eqn:mulinfodet},
where $\bm{\Sigma}_{\bm{A}k}$ is the covariance matrix for $\bm{A}_k$.

In order for \eqref{eqn:mulinfodet} to be valid, we need to make sure that the noise covariance matrix $\bm{\Sigma_\Omega}$ is invertible. We start by inspecting the positive definiteness of $\bm{\Sigma_\Omega}$. Notice that the quadratic form 
% $\left[\bm{A}_1^\dagger,\bm{A}_2^\dagger\right]\left[\begin{matrix}\bm{I}_M\otimes\bm{G}_\delta &\bm{I}_M\otimes\bm{G}_{\delta,12} \\ \bm{I}_M\otimes\bm{G}_{\delta,21}  & \bm{I}_M\otimes\bm{G}_\delta \end{matrix}\right]\left[\begin{matrix}
%     \bm{A}_1\\ \bm{A}_{\color{red}2} 
% \end{matrix}\right]$ 
$\bm{A}^\dagger\bm{\Sigma_\Omega}\bm{A}$ 
is the sum of the energy of any $M$ signals, which are in the form $ \sum_{n=0}^{N-1}\left[a^m_1[n]p(t-n\delta T-\tau_1)+a^m_2[n]p(t-n\delta T-\tau_2)\right]$, $m=1,\dots,M$. We can see that as long as %$\left[\begin{matrix}
    %\bm{A}_1\\ \bm{A}_{\color{red}2} 
%\end{matrix}\right]$ 
$\bm{A}$ is not a zero vector, the quadratic form will not be zero as the energy of the signal is not zero. Thus, we know that the noise covariance matrix is positive definitive and it is invertible.

\begin{remark}
For root raised cosine pulses assumed in this paper, we have to limit our discussion to $\delta\geq\frac{1}{1+\beta}$, in order to have numerical stability. Although $\bm{G}_{\delta}$ is always positive definite \cite{ourpaper}, some eigenvalues approach zero if $\delta < \frac{1}{1+\beta}$.% According to \cite{ourpaper,ishihara2021evolution}, matrix the $\bm{G}_\delta$ is positive definite as long as $\delta\geq\frac{1}{1+\beta}$. {\color{red}isn't it always pos definite? but if this condition is not satisfied, some eigenvalues are very close to zero? we might need to update this sentence.}
\end{remark} 

% We can further simplify \eqref{eqn:mulinfodet} by manipulating the inverse of the noise covariance matrix in \eqref{eqn:noisecov} \cite{matrices} and write the inverse as in \eqref{eqn:invnoisecov}. However, to be able to write \eqref{eqn:invnoisecov}, we still need the matrix $(\bm{G}_\delta-  \bm{G}_{\delta,12}\bm{G}_\delta^{-1}\bm{G}_{\delta,21})$ to be invertible. Therefore, we have the following lemma.
We can further simplify \eqref{eqn:mulinfodet} by manipulating the inverse of the noise covariance matrix in \eqref{eqn:noisecov}. Let us define
\begin{align}
\bm{Q}\triangleq \bm{G}_\delta-  \bm{G}_{\delta,12}\bm{G}_\delta^{-1}\bm{G}_{\delta,21}. \label{eqn:Q}
\end{align} To find the inverse of $\bm{\Sigma_\Omega}$ \eqref{eqn:noisecov}, we still need $\bm{Q}$ to be invertible. Therefore, we have the following lemma.
\begin{lemma}\label{lem:lem1}
The matrix $\bm{Q}$ in \eqref{eqn:Q} is invertible as long as $\delta\geq\frac{1}{1+\beta}$.
\end{lemma}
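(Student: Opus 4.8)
The plan is to recognize $\bm{Q}$ as a Schur complement and then lean entirely on the positive definiteness of $\bm{\Sigma_\Omega}$ that was just established. Reading $\bm{\Sigma_\Omega}$ in its natural $2\times2$ block form from \eqref{eqn:noisecov}, the top-left block is $\bm{I}_M\otimes\bm{G}_\delta$ and the off-diagonal blocks are $\bm{I}_M\otimes\bm{G}_{\delta,12}$ and $\bm{I}_M\otimes\bm{G}_{\delta,21}$. The first step I would carry out is to check that the Schur complement of this top-left block is exactly $\bm{I}_M\otimes\bm{Q}$: by the mixed-product property of the Kronecker product, $(\bm{I}_M\otimes\bm{G}_{\delta,12})(\bm{I}_M\otimes\bm{G}_\delta)^{-1}(\bm{I}_M\otimes\bm{G}_{\delta,21})=\bm{I}_M\otimes(\bm{G}_{\delta,12}\bm{G}_\delta^{-1}\bm{G}_{\delta,21})$, so the Schur complement equals $\bm{I}_M\otimes(\bm{G}_\delta-\bm{G}_{\delta,12}\bm{G}_\delta^{-1}\bm{G}_{\delta,21})=\bm{I}_M\otimes\bm{Q}$. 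Hence $\bm{Q}$ is invertible if and only if this Schur complement is.

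Next I would exploit the Hermitian structure. Since $\bm{G}_\delta=\bm{G}_\delta^\dagger$ and $\bm{G}_{\delta,12}^\dagger=\bm{G}_{\delta,21}$, the matrix $\bm{\Sigma_\Omega}$ is Hermitian, and it was already shown to be positive definite via the signal-energy argument. I would then write the block $\bm{L}\bm{D}\bm{L}^\dagger$ factorization
\[
\bm{\Sigma_\Omega}=\bm{L}\left[\begin{matrix}\bm{I}_M\otimes\bm{G}_\delta & \bm{0}\\ \bm{0} & \bm{I}_M\otimes\bm{Q}\end{matrix}\right]\bm{L}^\dagger,
\]
where $\bm{L}$ is the unit lower-triangular block matrix whose off-diagonal block is $(\bm{I}_M\otimes\bm{G}_{\delta,21})(\bm{I}_M\otimes\bm{G}_\delta)^{-1}$. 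Because $\bm{L}$ is nonsingular, this is a congruence, so by Sylvester's law of inertia the block-diagonal middle factor inherits the positive definiteness of $\bm{\Sigma_\Omega}$. In particular $\bm{I}_M\otimes\bm{Q}\succ\bm{0}$, which forces $\bm{Q}\succ\bm{0}$ and therefore $\bm{Q}$ invertible.

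The only place the hypothesis $\delta\geq\frac{1}{1+\beta}$ enters is in guaranteeing that the top-left block $\bm{I}_M\otimes\bm{G}_\delta$ is genuinely invertible, so that the Schur complement and the factorization above are well-defined. By the Remark, $\bm{G}_\delta$ is positive definite for every $\delta$, but its eigenvalues approach zero once $\delta<\frac{1}{1+\beta}$; the threshold keeps them bounded away from zero so that $\bm{G}_\delta^{-1}$ exists and is numerically stable. I expect the main obstacle to be essentially bookkeeping rather than any deep step: carefully verifying the Hermitian and positive-definite structure together with the Kronecker identities, given that the positive definiteness of $\bm{\Sigma_\Omega}$ is already in hand. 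If one wanted only invertibility rather than full definiteness, an even shorter route is the determinant identity $\det\bm{\Sigma_\Omega}=\det(\bm{I}_M\otimes\bm{G}_\delta)\,\det(\bm{I}_M\otimes\bm{Q})$, from which $\det\bm{Q}\neq0$ follows the moment both of the other determinants are nonzero.
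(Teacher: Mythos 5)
Your strategy is the same as the paper's---identify $\bm{Q}$ as a Schur complement of a Hermitian positive definite matrix and get invertibility from positive definiteness---and your version is in fact more self-contained: where the paper cites external results for the positive definiteness of the block Gram matrix and then invokes the Schur complement lemma, you reuse the in-text signal-energy argument for $\bm{\Sigma_\Omega}$ and make the Schur step explicit via a congruence and Sylvester's law of inertia. However, one of your steps is false as written and needs repair.

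First, the harmless slip: with $A=D=\bm{I}_M\otimes\bm{G}_\delta$, $B=\bm{I}_M\otimes\bm{G}_{\delta,12}$, $C=\bm{I}_M\otimes\bm{G}_{\delta,21}$, the matrix $\bm{I}_M\otimes\bm{Q}=A-BD^{-1}C$ is the Schur complement of the \emph{bottom-right} block of $\bm{\Sigma_\Omega}$, not of the top-left block; your computed expression still equals $\bm{I}_M\otimes\bm{Q}$ only because $A=D$ here. Second, the genuine error: the displayed factorization cannot hold. A unit block \emph{lower}-triangular congruence whose off-diagonal block is $CA^{-1}$ forces the second diagonal block of the middle factor to be $D-CA^{-1}B=\bm{I}_M\otimes\left(\bm{G}_\delta-\bm{G}_{\delta,21}\bm{G}_\delta^{-1}\bm{G}_{\delta,12}\right)$, which is \emph{not} $\bm{I}_M\otimes\bm{Q}$: since $\bm{G}_{\delta,12}$, $\bm{G}_{\delta,21}$, and $\bm{G}_\delta^{-1}$ do not commute for finite $N$, in general $\bm{G}_{\delta,21}\bm{G}_\delta^{-1}\bm{G}_{\delta,12}\neq\bm{G}_{\delta,12}\bm{G}_\delta^{-1}\bm{G}_{\delta,21}$. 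So, as written, your congruence establishes positive definiteness of the \emph{other} Schur complement, and the conclusion about $\bm{Q}$ does not follow from it. The fix is one line: eliminate against the bottom-right block instead, i.e.\ take the unit block \emph{upper}-triangular matrix $\bm{U}$ with off-diagonal block $BD^{-1}=(\bm{I}_M\otimes\bm{G}_{\delta,12})(\bm{I}_M\otimes\bm{G}_\delta)^{-1}$, for which, using $C=B^\dagger$ and $D=D^\dagger$,
\begin{equation*}
\bm{\Sigma_\Omega}=\bm{U}\left[\begin{matrix}\bm{I}_M\otimes\bm{Q} & \bm{0}\\ \bm{0} & \bm{I}_M\otimes\bm{G}_\delta\end{matrix}\right]\bm{U}^\dagger .
\end{equation*}
This is a valid congruence, Sylvester's law of inertia then yields $\bm{I}_M\otimes\bm{Q}\succ\bm{0}$, hence $\bm{Q}$ is invertible; your closing determinant identity $\det\bm{\Sigma_\Omega}=\det(\bm{I}_M\otimes\bm{G}_\delta)\det(\bm{I}_M\otimes\bm{Q})$ also becomes legitimate with this corrected factorization. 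Your reading of where $\delta\geq\frac{1}{1+\beta}$ enters (it keeps $\bm{G}_\delta$ well-conditioned and thus safely invertible, while positive definiteness of $\bm{G}_\delta$ holds for all $\delta$) agrees with the paper's.
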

\begin{proof}
Notice that the matrix $\bm{Q}$ in \eqref{eqn:Q} is the Schur complement of $\left[\begin{matrix}
    \bm{G}_\delta & \bm{G}_{\delta,12} \\ \bm{G}_{\delta,21} & \bm{G}_\delta
\end{matrix}\right]$ \cite{matrices}. We know that the latter matrix is positive definite \cite{zhang2023capacity}, {if $\bm{G}_\delta$ is positive definite. Since the matrix $\bm{G}_\delta$ is always positive definite \cite{ourpaper}, by the Schur complement lemma \cite{matrices}, we conclude that $\bm{Q}$ is positive definite and thus invertible.} 
\end{proof} 

Then, using the definition of $\bm{Q}$ and Lemma~\ref{lem:lem1}, we can write the inverse of $\bm{\Sigma_\Omega}$ in \eqref{eqn:noisecov} as in \eqref{eqn:invnoisecov} {on the next page.} \begin{figure*}[t]
    \begin{align}
&\left[\begin{matrix}\bm{I}_M\otimes\bm{G}_\delta &\bm{I}_M\otimes\bm{G}_{\delta,12} \\ \bm{I}_M\otimes\bm{G}_{\delta,21}  & \bm{I}_M\otimes\bm{G}_\delta \end{matrix}\right]^{-1}\notag \\ &=
    \left[\begin{matrix}\makecell{\bm{I}_M\otimes\bm{G}_\delta^{-1} +(\bm{I}_M\otimes\bm{G}_\delta^{-1})(\bm{I}_M\otimes\bm{G}_{\delta,12})  \\ \times\big((\bm{I}_M\otimes\bm{G}_{\delta})-  (\bm{I}_M\otimes\bm{G}_{\delta,12})(\bm{I}_M\otimes\bm{G}_{\delta}^{-1}) \\ \times(\bm{I}_M\otimes\bm{G}_{\delta,21})\big)^{-1}(\bm{I}_M\otimes\bm{G}_{\delta,21})(\bm{I}_M\otimes\bm{G}_\delta^{-1}) }  &\makecell{-(\bm{I}_M\otimes\bm{G}_\delta^{-1})(\bm{I}_M\otimes\bm{G}_{\delta,12})\big((\bm{I}_M\otimes\bm{G}_{\delta})\\-   (\bm{I}_M\otimes\bm{G}_{\delta,12})(\bm{I}_M\otimes\bm{G}_{\delta}^{-1}) (\bm{I}_M\otimes\bm{G}_{\delta,21})\big)^{-1} }\\ &  \\ \makecell{-\big((\bm{I}_M\otimes\bm{G}_{\delta})-   (\bm{I}_M\otimes\bm{G}_{\delta,12})(\bm{I}_M\otimes\bm{G}_{\delta}^{-1}) \\ \times  (\bm{I}_M\otimes\bm{G}_{\delta,21})\big)^{-1}(\bm{I}_M\otimes\bm{G}_{\delta,21})(\bm{I}_M\otimes\bm{G}_\delta^{-1})  }  & \makecell{\big((\bm{I}_M\otimes\bm{G}_{\delta})-   (\bm{I}_M\otimes\bm{G}_{\delta,12})\\ \times (\bm{I}_M\otimes\bm{G}_{\delta}^{-1})  (\bm{I}_M\otimes\bm{G}_{\delta,21})\big)^{-1}} \end{matrix}\right] \notag\\
    & =\left[\begin{matrix}     {\bm{I}_M\otimes\big(\bm{G}_\delta^{-1}+\bm{G}_\delta^{-1}\bm{G}_{\delta,12}\bm{Q}^{-1}\bm{G}_{\delta,21}\bm{G}_\delta^{-1}\big)} & {-\bm{I}_M\otimes\big(\bm{G}_\delta^{-1}\bm{G}_{\delta,12}\bm{Q}^{-1}\big)} \\
        {-\bm{I}_M\otimes\big(\bm{Q}^{-1}\bm{G}_{\delta,21}\bm{G}_\delta^{-1}\big)}  &  {\bm{I}_M\otimes\bm{Q}^{-1}}
    \end{matrix}\right] \label{eqn:invnoisecov}
    \end{align}
\end{figure*}Moreover, we define 
    \begin{align}
   \left[\begin{matrix}\bm{M}_{11} & \bm{M}_{12} \\ \bm{M}_{21} & \bm{M}_{22}\end{matrix}\right] \triangleq \bm{H}^\dagger\bm{\Sigma_\Omega}^{-1}\bm{H} \label{eqn:M}
\end{align}
and by applying the mixed product property of Kronecker product, we obtain a simplification for each block of the matrix in \eqref{eqn:M} as
\begin{align}
\bm{M}_{11}&=\bm{H}_1^\dagger\bm{H}_1\otimes\bm{G}_\delta \label{eqn:simp1}\\ 
\bm{M}_{12}&=\bm{H}_1^\dagger\bm{H}_2\otimes\bm{G}_{\delta,12}  \\
\bm{M}_{21}&=\bm{H}_2^\dagger\bm{H}_1\otimes\bm{G}_{\delta,21} \\
\bm{M}_{22}&=\bm{H}_2^\dagger\bm{H}_2\otimes\bm{G}_\delta.  \label{eqn:simp2}
    \end{align}
%{\color{red}We then plug \eqref{eqn:invnoisecov} into \eqref{eqn:mulinfodet}, apply Sylvester's determinant identity and the property of the Kronecker product \cite{matrices} to write \eqref{eqn:mulinfodet} as } 
Therefore, the simplified mutual information expression for the sum rate in \eqref{eqn:mulinfodet} becomes 
\begin{align}
    &I(\bm{Y};\bm{A}_1,\bm{A}_2)  
    =\log_2\det\bigg(\bm{I}_{2MN}  \notag  \\ & +   \left[\begin{matrix}\bm{H}_1^\dagger\bm{H}_1\otimes\bm{G}_\delta & \bm{H}_1^\dagger\bm{H}_2\otimes\bm{G}_{\delta,12} \\ \bm{H}_2^\dagger\bm{H}_1\otimes\bm{G}_{\delta,21} & \bm{H}_2^\dagger\bm{H}_2\otimes\bm{G}_\delta\end{matrix}\right]
    \left[\begin{matrix}\bm{\Sigma}_{\bm{A}1} &\bm{0} \\ \bm{0}  & \bm{\Sigma}_{\bm{A}2} \end{matrix}\right]
     \bigg). \label{eqn:simpsummul}
\end{align}
We can also obtain the single user mutual information expression from \eqref{eqn:simpsummul}, since 
% \begin{align}
%     I(\bm{Y};\bm{A}_1|\bm{A}_2)&=h(\bm{Y}|\bm{A}_2) - h(\bm{Y}|\bm{A}_1, \bm{A}_2) \notag\\
%     &=\frac{1}{2}\log\det\left(\mathbb{E}\left[\bm{Y}\bm{Y}^\dagger\right]\right)|_{\bm{A}_2=0}-\frac{1}{2}\log\det\left(\mathbb{E}\left[\bm{\Omega}\bm{\Omega}^\dagger\right]\right) \\
%     &=\frac{1}{2}\log_2\det\left(\bm{I}_{2MN} +
%     \bm{\Sigma_\Omega}^{-1}\bm{H}\left[\begin{matrix}\bm{\Sigma}_{\bm{A}1} &\bm{0} \\ \bm{0}  & \bm{0} \end{matrix}\right]\bm{H}^\dagger
%      \right)   \\
%      &=\frac{1}{2}\log_2\det\bigg(\bm{I}_{2MN} \notag\\ 
%  &\quad\quad+\left[\begin{matrix}\bm{H}_1^\dagger\bm{H}_1\otimes\bm{G}_\delta & \bm{H}_1^\dagger\bm{H}_2\otimes\bm{G}_{\delta,12} \\ \bm{H}_2^\dagger\bm{H}_1\otimes\bm{G}_{\delta,21} & \bm{H}_2^\dagger\bm{H}_2\otimes\bm{G}_\delta\end{matrix}\right]\left[\begin{matrix}\bm{\Sigma}_{\bm{A}1} &\bm{0} \\ \bm{0}  & \bm{0} \end{matrix}\right]\bigg)\\
%  &=\frac{1}{2}\log_2\det\left(\bm{I}_{2MN}+\left(\bm{H}_1^\dagger\bm{H}_1\otimes\bm{G}_\delta\right)\bm{\Sigma}_{\bm{A}1}\right).
% \end{align}
\begin{align}
   \lefteqn{ I(\bm{Y};\bm{A}_1|\bm{A}_2)=} \\ &=h(\bm{Y}|\bm{A}_2) - h(\bm{Y}|\bm{A}_1, \bm{A}_2) \notag\\
    &=\log\det\left(\mathbb{E}\left[\bm{Y}\bm{Y}^\dagger\right]\right)|_{\bm{A}_2=0}-\log\det\left(\mathbb{E}\left[\bm{\Omega}\bm{\Omega}^\dagger\right]\right) \\
    &=\log_2\det\left(\bm{I}_{2MN} +
    \bm{\Sigma_\Omega}^{-1}\bm{H}\left[\begin{matrix}\bm{\Sigma}_{\bm{A}1} &\bm{0} \\ \bm{0}  & \bm{0} \end{matrix}\right]\bm{H}^\dagger
     \right)   \\
     &=\log_2\det\bigg(\bm{I}_{2MN} \notag\\ 
 &\quad\quad+\left[\begin{matrix}\bm{H}_1^\dagger\bm{H}_1\otimes\bm{G}_\delta & \bm{H}_1^\dagger\bm{H}_2\otimes\bm{G}_{\delta,12} \\ \bm{H}_2^\dagger\bm{H}_1\otimes\bm{G}_{\delta,21} & \bm{H}_2^\dagger\bm{H}_2\otimes\bm{G}_\delta\end{matrix}\right]\left[\begin{matrix}\bm{\Sigma}_{\bm{A}1} &\bm{0} \\ \bm{0}  & \bm{0} \end{matrix}\right]\bigg)\\
 &=\log_2\det\left(\bm{I}_{2MN}+\left(\bm{H}_1^\dagger\bm{H}_1\otimes\bm{G}_\delta\right)\bm{\Sigma}_{\bm{A}1}\right).\label{eqn:singleuserrate}
\end{align}
As expected, \eqref{eqn:singleuserrate} is the same as the MIMO FTN mutual information expression in \cite[(29)]{ourpaper}. Similarly, the single user rate expression for the second user is 
\begin{align}
    I(\bm{Y};\bm{A}_2|\bm{A}_1)=\log_2\det\left(\bm{I}_{2MN}+\left(\bm{H}_2^\dagger\bm{H}_2\otimes\bm{G}_\delta\right)\bm{\Sigma}_{\bm{A}2}\right).
\end{align}

Each user in a multiple access channel is limited in transmission power. Since users perform MIMO FTN transmission, the power constraint for each user is equal to the single user MIMO FTN power constraint \cite{ourpaper}, and is written as
\begin{equation}
    \frac{1}{N\delta T} \text{tr}\left(\left(\bm{I}_M\otimes\bm{G}_\delta\right)\bm{\Sigma}_{\bm{A}k}\right)\leq P_k,  \quad k=1,2. \label{eqn:powerconstraint}
\end{equation}
Thus, we can now write the capacity region for this channel according to \cite{zhang2023capacity} as 
\begin{equation}
    C=\text{closure}\left(\underset{N\rightarrow\infty}{\liminf}~C_N\right), \label{eqn:detregdef}
\end{equation}
where $C_N$ is\footnote{Note that the symbol $\succeq$ means positive definite.}
\begin{equation}
	\begin{aligned}
		C_N= \underset{\substack{\frac{1}{N\delta T} \text{tr}\left(\left(\bm{I}_M\otimes\bm{G}_\delta\right)\bm{\Sigma}_{\bm{A}k}\right)\leq P_k \\ \bm{\Sigma}_{\bm{A}k}\succeq0, ~k=1,2 }}{\bigcup}\bigg\{(R_1, R_2): \label{eqn:CN}
	\end{aligned}
\end{equation}
\begin{align}  &R_1\leq\log_2\det\left(\bm{I}_{2MN}+\left(\bm{H}_1^\dagger\bm{H}_1\otimes\bm{G}_\delta\right)\bm{\Sigma}_{\bm{A}1}\right) \label{eqn:single1rate}\\
		  &R_2\leq\log_2\det\left(\bm{I}_{2MN}+\left(\bm{H}_2^\dagger\bm{H}_2\otimes\bm{G}_\delta\right)\bm{\Sigma}_{\bm{A}2}\right) \label{eqn:single2rate}\\
		  &R_1+R_2\leq \log_2\det\bigg(\bm{I}_{2MN}  \notag \\    &+\left[\begin{matrix}\bm{H}_1^\dagger\bm{H}_1\otimes\bm{G}_\delta & \bm{H}_1^\dagger\bm{H}_2\otimes\bm{G}_{\delta,12} \\ \bm{H}_2^\dagger\bm{H}_1\otimes\bm{G}_{\delta,21} & \bm{H}_2^\dagger\bm{H}_2\otimes\bm{G}_\delta\end{matrix}\right]
    \left[\begin{matrix}\bm{\Sigma}_{\bm{A}1} &\bm{0} \\ \bm{0}  & \bm{\Sigma}_{\bm{A}2} \end{matrix}\right]
     \bigg)\bigg\}. \label{eqn:sumrate}  
\end{align}

\begin{remark}
    From \eqref{eqn:CN}-\eqref{eqn:sumrate}   we can see that when $\delta=1$, the $\bm{G}_\delta$ matrix becomes the identity matrix $\bm{I}_{N}$ and the $(n,m)th$ entries of $\bm{G}_{\delta,12}$ and $\bm{G}_{\delta,21}$ become {$g\left(\left(n-m\right)T+\tau_1-\tau_2\right)$ and $g\left(\left(n-m\right)T+\tau_2-\tau_1\right)$} respectively. The capacity region of the MIMO asynchronous MAC with FTN reduces to the MIMO asynchronous MAC capacity region without FTN \cite{verdu,ganji}. 
\end{remark}

\section{Proposed Achievable Rate Region}\label{sec:powallo}
Although \eqref{eqn:detregdef} is the capacity region for MIMO asynchronous MAC with FTN, it is not easy to calculate it in closed form or to optimize it. This is due to the fact that there is no single covariance matrix that satisfies both the single user rate expressions \eqref{eqn:single1rate} and \eqref{eqn:single2rate}, and the sum rate expression \eqref{eqn:sumrate} simultaneously. Therefore, the capacity region is not a pentagon as in synchronous MAC \cite{cover}, but has smooth corners as in \cite{verdu}.

 As the system includes MIMO, asynchronous transmission, and FTN, we propose a power allocation scheme to optimize over the MIMO channel, asynchronous transmission and FTN to obtain an achievable rate region. Thus, we suggest a set of input covariance matrices 
 $\mathcal{R}$. The set $\mathcal{R}$ consists of input covariance matrix pairs parameterized by $\alpha\in\mathbb{R}, 0\leq\alpha\leq1$,  namely, 
 \begin{equation}
     \mathcal{R}=\left\{ \left(\bm{\Sigma}^\alpha_{\bm{A}1},\bm{\Sigma}^\alpha_{\bm{A}2}\right)  \right\},
 \end{equation}where the covariance matrices $\bm{\Sigma}^\alpha_{\bm{A}k}, k=1,2,$ have the structure 
\begin{equation}
\bm{\Sigma}^\alpha_{\bm{A}k}=\bm{Z}_k\otimes\bm{\Xi}^\alpha_k. \label{eqn:Zk}
\end{equation}
The aim behind this structure is to adapt to the MIMO channel via $\bm{Z}_k$ and to provide precoding, or time correlation, against inter-symbol interference due to asynchronous transmission and FTN via $\bm{Xi}_k^\alpha$. 

In \eqref{eqn:Zk}, $\alpha$ is introduced as an auxiliary variable to obtain all rate pairs on the achievable rate region boundary. The $L \times L$  matrix $\bm{Z}_k$ is obtained by waterfilling \cite{telatarmimo} according to the MIMO channel $\bm{H}_k$ with the power constraint 
\begin{equation}
    \text{tr}\left(\bm{Z}_k\right)\leq P_k.
\end{equation}
The matrices $\bm{Z}_1, \bm{Z}_2$ remain the same for all $\alpha, 0\leq\alpha\leq1$. Thus they are the same for all $\bm{\Sigma}^\alpha_{\bm{A}1}$ and $\bm{\Sigma}^\alpha_{\bm{A}2}$, respectively. %{\color{red}do we mean for all alpha? or Xi} 

%{\color{blue}We follow what is done in \cite{zhang2023capacity} and \cite{verdu} to obtain $N \times N$ matrices $\bm{\Xi}^\alpha_k$. First we introduce a mapping $f(\cdot)$, and matrices }

To obtain the $N \times N$ matrices $\bm{\Xi}^\alpha_k$, we first introduce the matrices $\bm{\Psi}_k^\alpha$ as
\begin{align} \bm{\Psi}^\alpha_1&=\bm{U}\tilde{\bm{\Psi}}^\alpha_1\bm{U}^\dagger=\bm{G}_\delta^{\frac{1}{2}}\bm{\Xi}^\alpha_1\bm{G}_\delta^{\frac{1}{2}} \label{eqn:Psidef}\\
\bm{\Psi}^\alpha_2&=\bm{V}\tilde{\bm{\Psi}}^\alpha_2\bm{V}^\dagger=\bm{G}_\delta^{\frac{1}{2}}\bm{\Xi}^\alpha_2\bm{G}_\delta^{\frac{1}{2}}, \label{eqn:Psidef2}
\end{align} where $\bm{U}$, and $\bm{V}$ are the unitary matrices in the singular value decomposition of 
\begin{equation}
    \bm{G}_\delta^{-\frac{1}{2}}\bm{G}_{\delta,12}\bm{G}_\delta^{-\frac{1}{2}}=\bm{U}\bm{\Lambda}\bm{V}^\dagger. \label{eqn:getlambda}
\end{equation} Here $\bm{\Lambda}$ is a diagonal matrix with the singular values of the matrix $\bm{G}_\delta^{-\frac{1}{2}}\bm{G}_{\delta,12}\bm{G}_\delta^{-\frac{1}{2}}$ on the diagonal. We denote those singular values by $\lambda_1, \dots, \lambda_N$. The matrices $\tilde{\bm{\Psi}}^\alpha_1$ and $\tilde{\bm{\Psi}}^\alpha_2$ are diagonal matrices, written as $\tilde{\bm{\Psi}}^\alpha_1=\text{diag}[\psi_1^1(\alpha),\dots,\psi_1^N(\alpha)]$ and $\tilde{\bm{\Psi}}^\alpha_2=\text{diag}[\psi_2^1(\alpha),\dots,\psi_2^N(\alpha)]$. 
The eigenvalues $\psi_k^i(\alpha), k=1,2, i=1,\dots,N$ are obtained from the mapping $f(\cdot)$, defined in \eqref{eqn:falpha} {on the next page.} $f(\cdot)$ takes in $\alpha$ as a parameter and returns a set of eigenvalues $\left[\psi_1^1(\alpha),\dots,\psi_1^N(\alpha), \psi_2^1(\alpha),\dots,\psi_2^N(\alpha)\right]$, or
\begin{equation} f(\alpha)=\left[\psi_1^1(\alpha),\dots,\psi_1^N(\alpha), \psi_2^1(\alpha),\dots,\psi_2^N(\alpha)\right].
\end{equation}
%The definition of $f(\alpha)$ is shown in \eqref{eqn:falpha}, 
%{\color{red}since the region is convex, we use this $\alpha$ to scan the whole rate region.} 
The mapping $f(\alpha)$ can be interpreted as  a different optimization problem for each $\alpha$ value and the solution to that specific problem is the desired set of eigenvalues $\left[\psi_1^1(\alpha),\dots,\psi_1^N(\alpha), \psi_2^1(\alpha),\dots,\psi_2^N(\alpha)\right]$\footnote{From this point, we will drop the argument of $\psi_k^i(\alpha)$ and write $\psi_k^i$ only for a simpler notation.}. For each $\alpha$, we plug the obtained eigenvalues back to  \eqref{eqn:Psidef},  \eqref{eqn:Psidef2} and then \eqref{eqn:Zk} to obtain a pair of input covariance matrices $\left(\bm{\Sigma}^\alpha_{\bm{A}1},\bm{\Sigma}^\alpha_{\bm{A}2}\right)$. We can then compute a rate pair $(R_1, R_2)$ on the achievable rate region boundary by plugging $\left(\bm{\Sigma}^\alpha_{\bm{A}1},\bm{\Sigma}^\alpha_{\bm{A}2}\right)$ into \eqref{eqn:single1rate}-\eqref{eqn:sumrate}. 

The power constraint for each optimization problem is equal to $\sum_{i=1}^N\psi_k^i\leq N\delta T, k=1,2$. This way, it is guaranteed that the individual power constraints of \eqref{eqn:powerconstraint} are satisfied for each user, since
\begin{align}
\text{tr}\left(\left(\bm{I}_M\otimes\bm{G}_\delta\right)\bm{\Sigma}^\alpha_{\bm{A}k}\right)&=\text{tr}\left(\left(\bm{I}_M\otimes\bm{G}_\delta\right)\left(\bm{Z}_k\otimes\bm{\Xi}^\alpha_k\right)\right) \notag\\  &=\text{tr}\left(\bm{Z}_k\otimes\bm{G}_\delta\bm{\Xi}^\alpha_k\right) \notag\\
     &=\text{tr}\left(\bm{Z}_k\right)\text{tr}\left(\bm{G}_\delta\bm{\Xi}^\alpha_k \right) \notag\\
     &=\text{tr}\left(\bm{Z}_k\right)\text{tr}\left(\bm{G}_\delta^{\frac{1}{2}}\bm{\Xi}^\alpha_k\bm{G}_\delta^{\frac{1}{2}} \right) \notag\\
     &\overset{(a)}{=}\text{tr}\left(\bm{Z}_k\right)\left(\sum_{i=1}^N\psi_k^i \right) \notag\\
     &\leq P_k N\delta T,\quad\quad\quad\quad k=1,2,
\end{align}
where (a) is because of \eqref{eqn:Psidef} and \eqref{eqn:Psidef2}.

\begin{figure*}
    \begin{equation}
        f(\alpha)=\left\{ \begin{aligned}          &\underset{\substack{\sum_{i=1}^N\psi_k^i\leq N\delta T,\\ \psi_i\geq0,i=1,\dots,N,k=1,2}}{\argmax}  \frac{1-2\alpha}{N}\sum_{i=0}^{N-1}\log_2\left(1+\frac{\psi_{2}^i}{\sigma_0^2}\right)+ \frac{\alpha}{N}\sum_{i=0}^{N-1}\log_2\left(1+\frac{\psi^i_{1}}{\sigma_0^2}+\frac{\psi_{2}^i}{\sigma_0^2}+\frac{\psi_{1}^i\psi_{2}^i}{\sigma_0^4}(1-|\lambda_i|^2)\right),  \notag \\ &\quad\quad\quad\quad\quad\quad\quad\quad\quad\quad\quad\quad\quad\quad\quad\quad\quad\quad\quad\quad\quad\quad\quad\quad\quad\quad\quad\quad\quad\quad\quad\quad\quad\quad\quad\quad\quad\text{if $0\leq \alpha\leq\frac{1}{2}$}  \\
       &\underset{\substack{\sum_{i=1}^N\psi_k^i\leq N\delta T,\\ \psi_i\geq0,i=1,\dots,N,k=1,2}}{\argmax} \frac{2\alpha-1}{N}\sum_{i=0}^{N-1}\log_2\left(1+\frac{\psi_{1}^i}{\sigma_0^2}\right)+ \frac{1-\alpha}{N}\sum_{i=0}^{N-1}\log_2\left(1+\frac{\psi^i_{1}}{\sigma_0^2}+\frac{\psi_{2}^i}{\sigma_0^2}+\frac{\psi_{1}^i\psi_{2}^i}{\sigma_0^4}(1-|\lambda_i|^2)\right), \notag \\ &\quad\quad\quad\quad\quad\quad\quad\quad\quad\quad\quad\quad\quad\quad\quad\quad\quad\quad\quad\quad\quad\quad\quad\quad\quad\quad\quad\quad\quad\quad\quad\quad\quad\quad\quad\quad\quad\text{if $\frac{1}{2}< \alpha\leq1$}
       \end{aligned} \right. \label{eqn:falpha}
    \end{equation}

\end{figure*}
%\begin{figure*}
%    \begin{equation}
%        f(\alpha)=\left\{ \begin{aligned}          &\underset{\substack{\sum_{i=1}^N\psi_k^i\leq N\delta T,\\ \psi_i\geq0,i=1,\dots,N,k=1,2}}{\argmax} \notag \\ &\frac{1-2\alpha}{N}\sum_{i=0}^{N-1}\log_2\left(1+\frac{\psi_{2}^i}{\sigma_0^2}\right)+ \frac{\alpha}{N}\sum_{i=0}^{N-1}\log_2\left(1+\frac{\psi^i_{1}}{\sigma_0^2}+\frac{\psi_{2}^i}{\sigma_0^2}+\frac{\psi_{1}^i\psi_{2}^i}{\sigma_0^4}(1-|\lambda_i|^2)\right),  \text{if $0\leq \alpha\leq\frac{1}{2}$}  \\
%       &\underset{\substack{\sum_{i=1}^N\psi_k^i\leq N\delta T,\\ \psi_i\geq0,i=1,\dots,N,k=1,2}}{\argmax}\notag \\ &\frac{2\alpha-1}{N}\sum_{i=0}^{N-1}\log_2\left(1+\frac{\psi_{1}^i}{\sigma_0^2}\right)+ \frac{1-\alpha}{N}\sum_{i=0}^{N-1}\log_2\left(1+\frac{\psi^i_{1}}{\sigma_0^2}+\frac{\psi_{2}^i}{\sigma_0^2}+\frac{\psi_{1}^i\psi_{2}^i}{\sigma_0^4}(1-|\lambda_i|^2)\right), \text{if $\frac{1}{2}< \alpha\leq1$}
%       \end{aligned} \right. \label{eqn:falpha}
%    \end{equation} 

%\end{figure*}
%{\color{red}Zichao, we have not provided any intuition about the power allocation scheme.}
\begin{remark}
    The intuition for the  optimization problem $f(\alpha)$ can be seen in the frequency domain. By applying Szeg\"o's theorem \cite{gray} to \eqref{eqn:Psidef}-\eqref{eqn:getlambda}, we obtain the spectrum domain representations of the matrices $\bm{\Psi}_1^\alpha$ and $\bm{\Psi}_2^\alpha$ respectively as $G_\delta(\lambda)S^\alpha_1(\lambda)$ and $G_\delta(\lambda)S^\alpha_2(\lambda)$. The function $G_\delta(\lambda)$ is the folded spectrum \cite{rusek}, which is defined as a function of the spectrum of the pulse shaping filter $P(\cdot)$; i.e.the continuous time Fourier transform of $p(t)$, as 
     \begin{equation}
   G_\delta(\lambda) = \frac{1}{\delta T}\sum_{n=-\infty}^{\infty}\left|P\left(\frac{\lambda-n}{\delta T}\right)\right|^2=\frac{1}{\delta T}\sum_{n=-\infty}^{\infty}G\left(\frac{\lambda-n}{\delta T}\right).\label{eqn:folded}
 \end{equation}
 The function $S^\alpha_k(\lambda)$ is the spectrum domain representation of the matrix $\bm{\Xi}_k^\alpha$. We also convert $\bm{G}_\delta^{-\frac{1}{2}}\bm{G}_{\delta,12}\bm{G}_\delta^{-\frac{1}{2}}$ into the spectrum domain to obtain $G_{12,\delta}(\lambda)/G_{\delta}(\lambda)$, where $G_{12,\delta}(\lambda)$ is defined as
 \begin{align}
  G_{12,\delta}(\lambda)&=\sum_{n=\infty}^{\infty}g(n\delta T + (\tau_1-\tau_2))e^{j2\pi\lambda n} \\
  &=\frac{1}{\delta T}\sum_{n=-\infty}^{\infty}G\left(\frac{\lambda-n}{\delta T}\right)e^{j2\pi(\tau_1-\tau_2)\frac{\lambda-n}{\delta T}}.
\end{align}
Given these definitions, we can say that solving $f(\alpha)$ is equivalent to finding the optimal input spectrum $G_\delta(\lambda)S^\alpha_k(\lambda), k=1,2$, at each one of the users according to the equivalent channel $G_{12,\delta}(\lambda)/G_{\delta}(\lambda)$. 
\end{remark} 
\begin{remark} We can show that the suggested power allocation scheme achieves the optimal single-user rate; i,e. the MIMO FTN capacity in \cite{ourpaper}. \end{remark}

To see this, we can look into $\alpha=0$ and $\alpha=1$. When $\alpha=0$, the optimization problem in \eqref{eqn:falpha} reduces to
\begin{equation}
    \underset{\substack{\sum_{i=1}^N\psi_k^i\leq N\delta T,\\ \psi_i\geq0,i=1,\dots,N,k=1,2}}{\argmax}\frac{1}{N}\sum_{i=0}^{N-1}\log_2\left(1+\frac{\psi_{2}^i}{\sigma_0^2}\right).
\end{equation}
The solution to this problem is $\psi_2^i=\delta T, i=1,\dots,N$. Then, $\bm{\Xi}_2^0 = \delta T\bm{G}^{-1}_\delta$, and the corresponding covariance matrix for the second user becomes $\bm{\Sigma}^0_{\bm{A}2}=\bm{Z}_2\otimes\left(\delta T\bm{G}_\delta^{-1}\right)$, which is the capacity-achieving covariance matrix for single-user MIMO FTN channel discussed in \cite{ourpaper}. When $\alpha=1$, the same discussion applies and we get the capacity-achieving covariance matrix for the first user.

\begin{figure}[t]
	%\centering
	\includegraphics[scale=0.6]{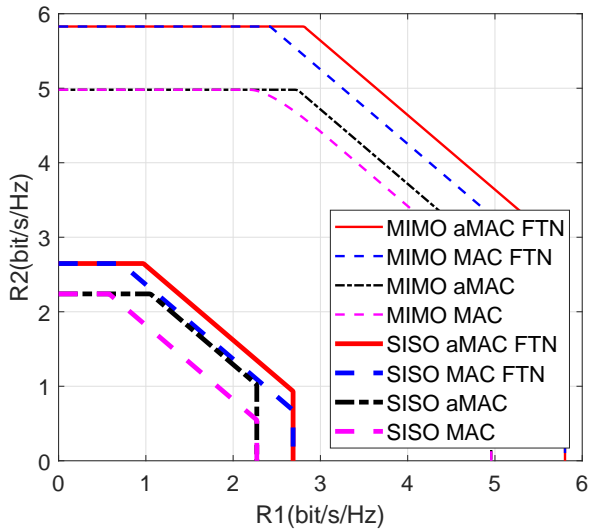}
	\caption{Achievable rate regions for MIMO asynchronous MAC with and without FTN, MIMO synchronous MAC with FTN, and the capacity region of MIMO MAC without FTN \cite{goldsmith}. Achievable rate regions for SISO transmission are also provided for comparison.}
	\label{fig:fig1}
\end{figure}
\begin{figure}[t]
	%\centering
	\includegraphics[scale=0.6]{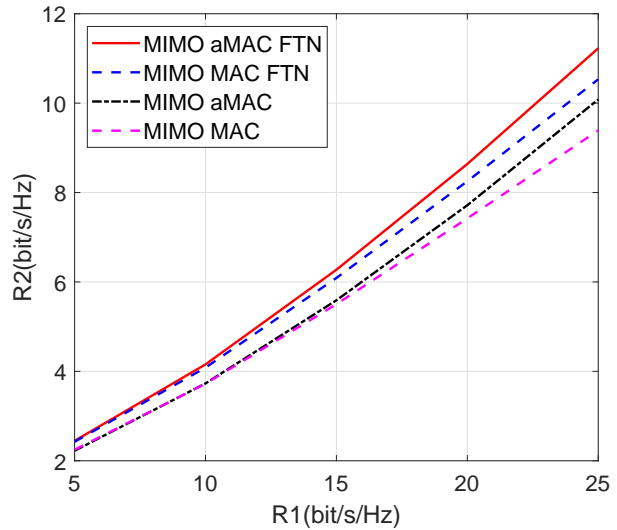}
	\caption{Maximum sum rate versus SNR for MIMO asynchronous MAC with and without FTN, MIMO synchronous MAC with FTN, and the sum capacity for MIMO MAC without FTN \cite{goldsmith}.}
	\label{fig:sumrate}
\end{figure}
\section{Numerical Results}\label{sec:num}
\begin{figure}[t]
	%\centering
	\includegraphics[scale=0.6]{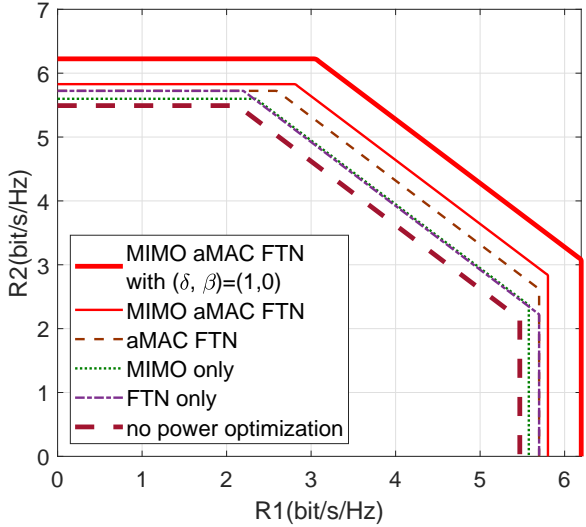}
	\caption{Achievable regions for MIMO aMAC FTN with different power allocation schemes.}
	\label{fig:fig2}
\end{figure}
In this section we plot the achievable rate region proposed in the previous section for different scenarios. We assume both the transmitters and the receiver have 3 antennas each. The transmitters employ root-raised cosine pulses for signaling and matched filtering. We set the symbol period $T=1$ and both users have signal to noise ratio (SNR)  $\frac{P_k}{\sigma^2_0}=20 \text{dB}, k=1,2$. All the regions are computed and averaged over the same 100 random channel realizations, where the power of the channel is normalized to 1. Furthermore, we set the time difference $\tau=\tau_2-\tau_1=0.5\delta T$ for the asynchronous transmission and we set $\delta=0.8$, $\beta=0.25$ for FTN. 

In Fig. \ref{fig:fig1}, we plot the achievable region of MIMO aMAC FTN and compare it with the achievable region of MIMO aMAC  to show the benefit of FTN. The MIMO aMAC region is obtained by performing waterfilling and asynchronous MAC power optimization.  Since there is no FTN for this curve, we obtain the $\bm{\Xi}_k^\alpha$ matrices in \eqref{eqn:Psidef}-\eqref{eqn:Psidef2} by replacing $\bm{G}_\delta$ with $\bm{I}_N$ and by evaluating $\bm{G}_{\delta,12}$ in \eqref{eqn:getlambda}  at $\delta=1$ respectively. We can see that FTN improves both the single-user rate as well as the sum-rate. In this figure, we also plot the MIMO MAC FTN achievable rate region by setting the time difference $\tau=0$, and the capacity region of MIMO MAC. We compare the MIMO aMAC region with MIMO MAC region, and also the MIMO aMAC FTN region with the MIMO MAC FTN region. Both of these comparisons show the gain from asynchronous transmission, either with or without FTN. We observe that asynchronous transmission improves sum rate but does not improve the single-user rate. In Fig. \ref{fig:fig1} we also plot the SISO counterparts for the above four regions and observe that gains due to asynchronous transmission and FTN are more emphasized in MIMO. 

To understand how the achievable rate regions change with SNR, in Fig.~\ref{fig:sumrate}, we plot the maximum sum rate versus SNR for MIMO asynchronous MAC with and without FTN, MIMO synchronous MAC with FTN and the sum capacity for MIMO MAC without FTN. %The curves are obtained with the sum rate from Fig. \ref{fig:fig1} averaged over 100 random channel realizations. 
We observe that asynchronous transmission with FTN starts presenting significant gains at moderate SNR values and this gain increases with SNR.

In order to understand how each component of the proposed power allocation scheme contributes individually, in Fig. \ref{fig:fig2}, we compare the MIMO aMAC FTN achievable rate region obtained with the power allocation scheme proposed in Section \ref{sec:powallo} with regions obtained by not omitting one or more components. The aMAC FTN region does not incorporate spatial power optimization and there is no waterfilling for the MIMO channels $\bm{H}_k$, $k=1,2$. The matrices $\bm{Z}_k$ of \eqref{eqn:Zk} are set to {$\bm{Z}_k= \frac{P_k}{L}\bm{I}_L$}, and we apply equal power allocation for each eigen-channel. On the other hand, the MIMO only curve means that we only perform waterfilling for the MIMO channels $\bm{H}_k$ and set the time difference $\tau=0$. We also do not perform power allocation for FTN, which means we replace   $\bm{G}_\delta$ and $\bm{G}_{\delta,12}$ in \eqref{eqn:Psidef}-\eqref{eqn:getlambda}  with $\bm{I}_N$ and $\bm{G}_{\delta,12}|_{\delta=1}$ respectively to obtain the $\bm{\Xi}_k^\alpha$ matrices in \eqref{eqn:Zk}. Similarly, the FTN only curve is obtained by performing equal power allocation for the MIMO channels and setting the time difference $\tau$ to be 0.  Finally, the no power optimization curve is obtained by performing equal power allocation for the MIMO channels $\bm{H}_k$, setting $\tau=0$  and   replacing  $\bm{G}_\delta$ and $\bm{G}_{\delta,12}$ in \eqref{eqn:Psidef}-\eqref{eqn:getlambda}  with $\bm{I}_N$ and $\bm{G}_{\delta,12}|_{\delta=1}$ respectively. We also plot the MIMO aMAC FTN with $(\delta,\beta)=(1,0)$ curve as an upper bound. As a rescdult, we conclude the following. If there are limited computational complexity resources to optimize power, we find that waterfilling for MIMO is a better approach to have reasonable performance. The MIMO only and FTN only curves have similar sum-rates, and it is easier to perform MIMO power allocation only. This is because larger matrices ($N \times N$) matrices need to be inverted for precoding against inter-symbol interference due to FTN, whereas $L \times L$ matrices are inverted to for waterfilling in space. However, if an increase in single-user rates is required, then FTN is helpful. Since the complexity for FTN only and aMAC FTN curves are similar, it is more meaningful to employ power optimization for asynchronous transmission with FTN together. This is because asynchronous transmission provides better inter-user interference mitigation and enhances the sum rate significantly. Obviously, it is the best if all power allocation mechanisms are utilized and the MIMO aMAC with FTN achieves the largest region among all.

\section{Conclusion}\label{sec:conc}

In this paper, we studied the performance of MIMO aMAC transmission with FTN. The system model is simplified by applying matrix manipulation techniques, then a novel power optimization scheme is proposed to obtain an achievable rate region. In the end, we show that our proposed power optimization scheme outperforms traditional MIMO MAC transmission and the gain brought by each part of the power optimization scheme.
Future work includes studying asynchronous transmission among antennas \cite{ovtdm,deliberate} and different acceleration factors among users. The extension of MIMO channel to massive MIMO will also be studied.

\bibliographystyle{IEEEtran}
% argument is your BibTeX string definitions and bibliography database(s)
\bibliography{main}

% Generated by IEEEtran.bst, version: 1.14 (2015/08/26)
\begin{thebibliography}{10}
\providecommand{\url}[1]{#1}
\csname url@samestyle\endcsname
\providecommand{\newblock}{\relax}
\providecommand{\bibinfo}[2]{#2}
\providecommand{\BIBentrySTDinterwordspacing}{\spaceskip=0pt\relax}
\providecommand{\BIBentryALTinterwordstretchfactor}{4}
\providecommand{\BIBentryALTinterwordspacing}{\spaceskip=\fontdimen2\font plus
\BIBentryALTinterwordstretchfactor\fontdimen3\font minus
  \fontdimen4\font\relax}
\providecommand{\BIBforeignlanguage}[2]{{%
\expandafter\ifx\csname l@#1\endcsname\relax
\typeout{** WARNING: IEEEtran.bst: No hyphenation pattern has been}%
\typeout{** loaded for the language `#1'. Using the pattern for}%
\typeout{** the default language instead.}%
\else
\language=\csname l@#1\endcsname
\fi
#2}}
\providecommand{\BIBdecl}{\relax}
\BIBdecl

\bibitem{telatarmimo}
E.~Telatar, ``Capacity of multi-antenna {G}aussian channels,'' \emph{European
  Transactions on Telecommunications}, vol.~10, no.~6, pp. 585--595, 1999.

\bibitem{massivemimo}
E.~G. Larsson, O.~Edfors, F.~Tufvesson, and T.~L. Marzetta, ``Massive {MIMO}
  for next generation wireless systems,'' \emph{IEEE Communications Magazine},
  vol.~52, no.~2, pp. 186--195, 2014.

\bibitem{cellfree}
J.~Zheng, J.~Zhang, J.~Cheng, V.~C.~M. Leung, D.~W.~K. Ng, and B.~Ai,
  ``Asynchronous cell-free massive {MIMO} with rate-splitting,'' \emph{IEEE
  Journal on Selected Areas in Communications}, vol.~41, no.~5, pp. 1366--1382,
  2023.

\bibitem{5g}
K.~Shafique, B.~A. Khawaja, F.~Sabir, S.~Qazi, and M.~Mustaqim, ``Internet of
  things {(IoT)} for next-generation smart systems: {A R}eview of current
  challenges, future trends and prospects for emerging {5G-IoT} scenarios,''
  \emph{IEEE Access}, vol.~8, pp. 23\,022--23\,040, 2020.

\bibitem{noma}
Y.~Saito, Y.~Kishiyama, A.~Benjebbour, T.~Nakamura, A.~Li, and K.~Higuchi,
  ``Non-orthogonal multiple access ({NOMA}) for cellular future radio access,''
  in \emph{2013 IEEE 77th Vehicular Technology Conference (VTC Spring)}, 2013,
  pp. 1--5.

\bibitem{cover}
T.~M. Cover and J.~A. Thomas, \emph{Elements of Information Theory}.\hskip 1em
  plus 0.5em minus 0.4em\relax USA: Wiley-Interscience, 2006.

\bibitem{goldsmith}
A.~Goldsmith, \emph{Wireless Communications}.\hskip 1em plus 0.5em minus
  0.4em\relax Cambridge University Press, 2005.

\bibitem{verdu}
S.~Verdu, ``The capacity region of the symbol-asynchronous {G}aussian
  multiple-access channel,'' \emph{IEEE Transactions on Information Theory},
  vol.~35, no.~4, pp. 733--751, 1989.

\bibitem{evolutionftn}
T.~Ishihara, S.~Sugiura, and L.~Hanzo, ``The evolution of faster-than-nyquist
  signaling,'' \emph{IEEE Access}, vol.~9, pp. 86\,535--86\,564, 2021.

\bibitem{ourpaper}
Z.~Zhang, M.~Yuksel, and H.~Yanikomeroglu, ``Faster-than-{N}yquist signaling
  for {MIMO} communications,'' \emph{IEEE Transactions on Wireless
  Communications}, vol.~22, no.~4, pp. 2379--2392, 2023.

\bibitem{gray}
R.~M. Gray, ``{T}oeplitz and circulant matrices: {A} {R}eview,''
  \emph{Foundations and Trends in Communications and Information Theory},
  vol.~2, no.~3, pp. 155--239, 2006.

\bibitem{matrices}
P.~Lancaster and M.~Tismenetsky, \emph{The Theory of Matrices: With
  Applications}.\hskip 1em plus 0.5em minus 0.4em\relax Elsevier Science, 1985.

\bibitem{zhang2023capacity}
Z.~Zhang, M.~Yuksel, G.~M. Guvensen, and H.~Yanikomeroglu, ``Capacity region of
  asynchronous multiple access channels with {FTN},'' \emph{IEEE Communications
  Letters}, pp. 1--1, 2023.

\bibitem{ganji}
M.~Ganji, X.~Zou, and H.~Jafarkhani, ``Asynchronous transmission for multiple
  access channels: {R}ate-region analysis and system design for uplink
  {NOMA},'' \emph{IEEE Transactions on Wireless Communications}, vol.~20,
  no.~7, pp. 4364--4378, 2021.

\bibitem{rusek}
F.~Rusek and J.~B. Anderson, ``{C}onstrained capacities for
  faster-than-{N}yquist signaling,'' \emph{IEEE Transactions on Information
  Theory}, vol.~55, no.~2, pp. 764--775, 2009.

\bibitem{ovtdm}
Y.~Hu, ``Capacity analysis and data detection of {OvTDM-MIMO} system,''
  \emph{IEEE Access}, vol.~11, pp. 20\,647--20\,656, 2023.

\bibitem{deliberate}
Q.~Wang, Y.~Chang, and D.~Yang, ``Deliberately designed asynchronous
  transmission scheme for {MIMO} systems,'' \emph{IEEE Signal Processing
  Letters}, vol.~14, no.~12, pp. 920--923, 2007.

\end{thebibliography}

\end{document}